\documentclass[pdflatex,sn-mathphys-ay]{sn-jnl}

\usepackage{graphicx}%
\usepackage{multirow}%
\usepackage{amsmath,amssymb,amsfonts}%
\usepackage{amsthm}%
\usepackage{graphicx}
\usepackage{booktabs}
\usepackage{array}
\usepackage{adjustbox}
\usepackage{mathrsfs}%
\usepackage[title]{appendix}%
\usepackage{xcolor}%
\usepackage{textcomp}%
\usepackage{manyfoot}%
\usepackage{booktabs}%
\usepackage{algorithm}%
\usepackage{algorithmicx}%
\usepackage{algpseudocode}%
\usepackage{listings}%

\theoremstyle{thmstyleone}%
\newtheorem{theorem}{Theorem}
\newtheorem{proposition}[theorem]{Proposition}%

\theoremstyle{thmstyletwo}%

\theoremstyle{thmstylethree}%

\begin{document}

\title[Article Title]{Discrete $p$--Fold Directional Models for Binned Orientation Data with
Applications to Sand-Grain Orientations and Urban Street Networks}


\author[1]{\fnm{Brajesh Kumar} \sur{Dhakad}}\email{brajeshdhakad85047@gmail.com}

\author*[2]{\fnm{Jayant} \sur{Jha}}\email{jayantjha@gmail.com}

\affil[1]{\orgdiv{Department of Mathematics}, \orgname{Indian Institute of Technology Madras}, \orgaddress{\city{Chennai}, \postcode{600036}, \state{Tamil Nadu}, \country{India}}}

\affil*[2]{\orgdiv{Statistical Sciences Division}, \orgname{Indian Statistical Institute, Kolkata}, \orgaddress{\city{Kolkata}, \postcode{700108}, \state{West Bengal}, \country{India}}}


\abstract{
 Discrete directional data exhibiting $p$-fold rotational symmetry, where directions separated by integer multiples of $2\pi/p$ are equivalent, arise in various applications but are often modeled using continuous distributions. In this article, we develop two general methods for constructing distributions for such data: symmetrization and folding of discrete circular distributions, and wrapping of discrete linear distributions. Particular models arising from these constructions are studied, together with their properties and estimation methods. For axial sand-grain orientation data, the proposed models are compared with discretized continuous axial distributions, whose category probabilities require numerical integration. The proposed models attain smaller Akaike information criterion values among the models considered while allowing direct likelihood evaluation without numerical integration. The framework is also applied to binned Manhattan street-orientation data, demonstrating its applicability to higher-order rotational structures, specifically the $4$-fold case.}

\keywords{Axial data, Directional Statistics, $p$--Fold rotational Symmetry, Street Orientation, Wrapped Distributions}



\maketitle

\section{Introduction}\label{sec1}

Directional statistics is a specialized branch of statistics concerned primarily with observations represented as unit vectors on manifolds such as the circle, sphere, and torus. It has applications in several areas, including animal navigation, environmental science, computer vision, geology, and geography. The progress of this field is evident in several books published over the decades, including \cite{lewis1993, Fisher1993, jammalamadaka2001, mardia2009, ley2017, ley2018}.

In directional statistics, circular data consists of observations (directions in the plane) that can be uniquely represented by an angle $\theta \in [0, 2\pi)$ measured counterclockwise from a fixed reference axis. In some applications, different directions represent the same physical orientation. More precisely, for a given positive integer $p$, the direction $\theta$ and $\theta + 2\pi/p$ are regarded as equivalent. Such observations are also referred to as observations that exhibit $p$--fold rotational symmetry. The observations corresponding to $p =1$ are called circular, whereas those with $p =2$ are axial. Axial data consist of unoriented axes (or random lines) for which the two opposite directions cannot be distinguished. A higher value of $p$ arises when the directions separated by an angle of $2\pi/p$ represent the same orientation pattern. For example, an orthogonal street grid produces a $4$--fold pattern, for which directions separated by $90^\circ$ are treated as equivalent.

For continuous observations exhibiting such rotational symmetry, some modelling approaches have been developed. A classical approach for handling $p$--fold rotational symmetry is to multiply each observation by $p$, thereby mapping the observations into \([0,2\pi)\). This transformation is discussed in Chapter 2 of \cite{Fisher1993} and Chapter 3 of \cite{mardia2009}. Subsequent methodological developments have focused predominantly on the axial ($p = 2$) case. \cite{arnold2006} proposed probabilistic models for axial data, including the axial normal distribution, and \cite{abe2011} introduced cosine-perturbed circular distributions by doubling the argument of a symmetric unimodal density. Further developments include the sine-skewed distributions proposed by \cite{abe2012}. Other recent advances in continuous axial modelling include Bayesian mixture frameworks \citep{fan2021} and multimodal distributions \citep{fernandez2025}.
 
In many applications, however, orientation observations are recorded or reported on a finite set of angular bins rather than as exact continuous angles. Such binned orientation data may arise from rounding before reporting, limited angular resolution of the measuring instrument, or a data-collection design in which observations are recorded directly in prespecified categories.  

One example arises in geology, where the orientations of the shortest projection elongations of sand grains from the Recent Gulf Coast beach are observed as axial data and grouped into $18$ angular bins, each of width $10^\circ$ \citep{curray1956}. These observations therefore provide a natural example of discrete axial data recorded on a finite angular support.

An example with $4$--fold rotational structure arises in geography from the orientation of urban street networks. Street-segment bearings are usually treated as axial because a street represented by an undirected edge has no distinguished starting direction. In addition, planned cities such as Manhattan are organized approximately as orthogonal grids. Such networks contain four equivalent full-circle directions separated by $90^\circ$. After reduction modulo $90^\circ$, their orientations can be represented on $[0,90^\circ)$ and grouped into a finite number of angular bins \citep{boeing2019}.

Binned orientation data also occur in other settings. For example, the swimming directions of \textit{Daphnia} relative to the plane of polarisation of surrounding light have been reported in $18$ axial bins of width $10^\circ$ \citep{waterman1963}. In human-detection applications using the Histogram of Oriented Gradients (HOG) descriptor, image gradients are commonly treated as unsigned, so their orientations are represented over $[0^\circ,180^\circ)$ and accumulated into a finite number of angular bins, commonly nine, to construct local orientation histograms \citep{dalal2005}.

The existing continuous distributions can be applied to such binned observations by integrating their densities over the corresponding angular bins. This process yields a valid probability mass for these angular bins, but numerical integrations are required because most of the density functions do not provide a closed form for these integrations, which are often repeatedly required during parameter estimation. An alternative is to formulate a probability mass function directly on observed angular bins. Such native discrete models assign probabilities to these observed angular bins directly and do not require any numerical integration.

In this paper, we develop a unified framework for constructing native discrete distributions on a reduced angular support, where directions differing by multiples of $2\pi/p$ are treated as equivalent. Two general construction methods are introduced, which are based on the symmetrization and folding of a discrete circular distribution, and wrapping of a discrete linear distribution onto a discrete angular grid in $[0, 2\pi/p)$.

For the axial sand-grain orientation data, the proposed models are compared with discretized versions of suitable continuous axial distributions, whose bin probabilities are obtained through numerical integration. The proposed models attain smaller Akaike information criterion values than the discretized continuous models considered. The framework is then applied to binned Manhattan street-orientation data to demonstrate its use for higher-order rotational structure, specifically the $4$--fold case.

The paper is organized as follows. Section~\ref{sec:2} presents the general framework for constructing new discrete models. Section~\ref{sec:3} introduces particular distributions arising from the two constructions and discusses their properties and estimation methods. Section~\ref{sec:5} presents the simulation study. Section~\ref{sec:6} applies the proposed models to the sand-grain and Manhattan street-orientation datasets. Finally, Section~\ref{sec:7} concludes this paper and outlines possible directions for future research.

\section{A General Framework for Discrete Distributions with $p$--Fold Rotational Symmetry} \label{sec:2}
In this section, we present two general methodologies for constructing
discrete probability models for observations exhibiting $p$-fold rotational
symmetry. Throughout this article, directions on the full circle that differ by
integer multiples of $2\pi/p$ are regarded as equivalent.

\subsection{Symmetrization and Folding of a Discrete Circular Variable}\label{sub.2.1}

Suppose $\Phi$ is a discrete circular random variable with a pmf $P_\Phi(\phi; \alpha,\omega)$ defined on the support 

\[
\Omega_\Phi = \left\{\frac{2\pi k}{\mathrm{m}}: k=0,\ldots,\mathrm{m}-1\right\}\subset[0,2\pi),
\]
consisting of $\mathrm{m}$ equally spaced angles on the unit circle, where $\mathrm{m}$ is a positive integer. Here $\alpha \in [0, 2\pi)$ denotes a rotational location parameter, and $\omega$ denotes the collection of any remaining model parameters.   

We construct a pmf for observations exhibiting $p$--fold rotational symmetry
by symmetrizing and folding the distribution of $\Phi$. For a positive integer
$p$, define the reduced angular support by mapping the states of
$\Omega_\Phi$ modulo $2\pi/p$:
\[
\Omega_\Theta
=
\left\{
\phi \bmod \frac{2\pi}{p}:
\phi\in\Omega_\Phi
\right\}.
\]
In general, this finite support contains
$\mathrm{m}/\operatorname{gcd}(\mathrm{m},p)$ equally spaced angular points,
where $\operatorname{gcd}(\mathrm{m},p)$ denotes the greatest common divisor
of $\mathrm{m}$ and $p$. The proposed construction is defined on this reduced
support $\Omega_\Theta$ and does not require $p$ to divide $\mathrm{m}$.
For the practical applications considered in this article, however, we take
$\mathrm{m}=\mathrm{h}p$,
where $\mathrm{h}$ is a positive integer. Since $p$ divides $\mathrm{m}$, the reduced
support then contains exactly $\mathrm{h}=\mathrm{m}/p$ angular points and can be
written explicitly as
\[
\Omega_\Theta
=
\left\{
\theta_j=\frac{2\pi j}{\mathrm{m}}:
j=0,\ldots,\mathrm{h}-1
\right\}
\subset
\left[0,\frac{2\pi}{p}\right).
\]
For each $\theta_j\in\Omega_\Theta$, the corresponding directions in the
original support are
\[
\theta_j+r\frac{2\pi}{p}
=
\frac{2\pi(j+r\mathrm{h})}{\mathrm{m}},
\qquad
r=0,\ldots,p-1,
\]
all of which belong to $\Omega_\Phi$. Thus, each point of $\Omega_\Theta$
represents $p$ directions in $\Omega_\Phi$ that differ by integer multiples
of $2\pi/p$. Accordingly, the proposed pmf is defined on the reduced discrete
support $\Omega_\Theta$, with these $p$ rotationally equivalent directions
represented by a single point.

To obtain the probability mass on $\Omega_\Theta$, we first define a function $ P^*_{p,\Phi}(x)$ on $[0, 2\pi)$ as 
\begin{equation}\label{eq:2.2}
 P^*_{p,\Phi}(x) = 
\begin{cases}
\displaystyle \frac{1}{p} \sum_{s = 0}^{p - 1} P_\Phi\left(x; \left(\alpha + s \frac{2 \pi}{p} \right) \bmod 2\pi, \omega \right), & x \in \Omega_\Phi \\[8pt]
0, & x\notin \Omega_\Phi.  
\end{cases}
\end{equation} 
For each $s=0,\ldots,p-1$, the corresponding term in \eqref{eq:2.2}
represents the underlying circular model evaluated at the shifted location
parameter
\[
\left(\alpha+s\frac{2\pi}{p}\right)\bmod 2\pi.
\]
The modulo $2\pi$ operation ensures that the shifted location remains in
$[0,2\pi)$. Thus, for $x\in\Omega_\Phi$, $P^*_{p,\Phi}(x)$ averages the
$p$ pmfs corresponding to these shifted values of the location parameter,
while the discrete support $\Omega_\Phi$ remains unchanged. For
$x\notin\Omega_\Phi$, $P^*_{p,\Phi}(x)$ is set equal to zero. The pmf on $\Omega_\Theta$ is defined as
\begin{equation} \label{eq:2.3}
    P_\Theta(\theta) = \sum_{r=0}^{p - 1} P^*_{p,\Phi} \left(\theta + r\frac{2 \pi}{p}\right), \quad \text{for } \theta \in \Omega_\Theta.
\end{equation} 

\begin{proposition}
The function $P_\Theta$ defined in \eqref{eq:2.3} is a valid probability mass
function on $\Omega_\Theta$.
\end{proposition}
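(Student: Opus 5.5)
We need to check two things: that $P_\Theta(\theta)\ge 0$ for every $\theta\in\Omega_\Theta$, and that $\sum_{\theta\in\Omega_\Theta}P_\Theta(\theta)=1$. Non-negativity is immediate. Each term $P_\Phi(\,\cdot\,;(\alpha+2\pi s/p)\bmod 2\pi,\omega)$ is a pmf and hence non-negative, so $P^*_{p,\Phi}$ in \eqref{eq:2.2} is non-negative. Then $P_\Theta$ in \eqref{eq:2.3} is a finite sum of non-negative terms.

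For normalization, we first show that $P^*_{p,\Phi}$ restricted to $\Omega_\Phi$ is itself a pmf. Since the support $\Omega_\Phi$ does not depend on the location parameter, we may interchange the two finite sums:
\[
\sum_{x\in\Omega_\Phi}P^*_{p,\Phi}(x)=\frac1p\sum_{s=0}^{p-1}\sum_{x\in\Omega_\Phi}P_\Phi\!\left(x;\left(\alpha+s\tfrac{2\pi}{p}\right)\bmod 2\pi,\omega\right)=\frac1p\sum_{s=0}^{p-1}1=1 .
\]
This uses only that $(\alpha+2\pi s/p)\bmod 2\pi\in[0,2\pi)$ is an admissible value of the location parameter, so each inner sum equals one.

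Next we show that the map $(j,r)\mapsto \theta_j+2\pi r/p=2\pi(j+r\mathrm{h})/\mathrm{m}$, for $j=0,\ldots,\mathrm{h}-1$ and $r=0,\ldots,p-1$, is a bijection onto $\Omega_\Phi$. This holds because every $k\in\{0,\ldots,\mathrm{m}-1\}$ has a unique representation $k=j+r\mathrm{h}$ with $0\le j<\mathrm{h}$ and $0\le r<p$, namely Euclidean division of $k$ by $\mathrm{h}$. Hence every argument $\theta+2\pi r/p$ appearing in \eqref{eq:2.3} lies in $\Omega_\Phi$, so the zero branch of \eqref{eq:2.2} is never used. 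Moreover, the double sum $\sum_j\sum_r P^*_{p,\Phi}(\theta_j+2\pi r/p)$ is a reindexing of $\sum_{x\in\Omega_\Phi}P^*_{p,\Phi}(x)$, which equals $1$.

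The only step needing care is this partition of $\Omega_\Phi$ into the $\mathrm{h}$ classes modulo $2\pi/p$. Under the standing convention $\mathrm{m}=\mathrm{h}p$, the bijection above settles it. If one wants the general case where $p\nmid \mathrm{m}$, I would argue differently. Let $x\sim x'$ when $x\equiv x'\pmod{2\pi/p}$. The equivalence classes of $\Omega_\Phi$ then correspond bijectively to $\Omega_\Theta$. For each $\theta$, the sum in \eqref{eq:2.3} picks up exactly the points of its class, since the shifts $\theta+2\pi r/p$ with $r=0,\ldots,p-1$ are distinct modulo $2\pi$, and any shift landing outside $\Omega_\Phi$ contributes zero. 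Summing over $\theta$ then again recovers the total mass of $P^*_{p,\Phi}$, which is $1$.
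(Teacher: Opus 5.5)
Your proof is correct and follows essentially the same route as the paper: non-negativity comes directly from the definitions, and the double sum over $(\theta,r)$ is reindexed as a single sum of $P^*_{p,\Phi}$ over $\Omega_\Phi$, with off-support terms contributing zero; interchanging the finite sums then gives $\frac{1}{p}\sum_{s}1=1$. The differences are cosmetic: you make the bijection explicit via Euclidean division when $\mathrm{m}=\mathrm{h}p$ and treat the general case separately, whereas the paper uses the reduction $\phi\mapsto\phi \bmod 2\pi/p$ with its unique $r$, which covers both cases at once.
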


\begin{proof}
From the definition in \eqref{eq:2.2}, each term in the sum given in \eqref{eq:2.3} is non-negative. Hence, $P_\Theta(\theta) \geq 0$ for every $\theta\in\Omega_\Theta$. 
It remains to verify that the probability masses sum to one. We have
\begin{equation}\label{eq:2.4}
\sum_{\theta \in \Omega_\Theta} P_\Theta(\theta) = \sum_{\theta \in \Omega_\Theta} \sum_{r=0}^{p - 1} P^*_{p,\Phi} \left(\theta + r\frac{2 \pi}{p}\right).  
\end{equation}
For every $\phi \in \Omega_\Phi$, let
\[ \theta = \phi \bmod \frac{2\pi}{p}.
\]
Then $\theta \in \Omega_\Theta$, and there exists a unique $r \in \left\{0,\cdots,p-1\right\}$ such that
\[
\phi = \theta + r\frac{2\pi}{p}.
\]
Any value $\theta + r\frac{2\pi}{p} \notin \Omega_\Phi$ contributes zero because $P^*_{p,\Phi}(x) = 0$ for $x \notin \Omega_{\Phi}$. Therefore the double sum can be written as

\begin{align*}
 \sum_{\theta \in \Omega_\Theta} P_\Theta(\theta)& =  \sum_{\phi \in \Omega_\Phi} P^*_{p,\Phi}(\phi)\\
 & =  \sum_{\phi \in \Omega_\Phi} \frac{1}{p}\left( \sum_{s = 0}^{p - 1} P_\Phi\left(\phi; \left(\alpha + s \frac{2 \pi}{p} \right) \bmod 2\pi, \omega \right)\right) \\
 & = \frac{1}{p} \sum_{s = 0}^{p - 1} \sum_{\phi \in \Omega_\Phi}  P_\Phi\left(\phi; \left(\alpha + s \frac{2 \pi}{p} \right) \bmod 2\pi, \omega \right).
\end{align*}
Since both sums are finite, we interchanged their order. For each $s \in \left\{0,\cdots,p-1\right\}$, the corresponding $P_\Phi$ is a valid pmf over $\Omega_\Phi$. Hence,
\[
\sum_{\theta \in \Omega_\Theta} P_\Theta(\theta) =
\frac{1}{p}\sum_{s=0}^{p-1}1 = 1.
\]

Therefore, $P_\Theta$ is a valid pmf on $\Omega_\Theta$.
\end{proof} 
The above framework is based on symmetrization by shifting the rotational
location parameter $\alpha$ by multiples of $2\pi/p$ and averaging the
resulting $p$ pmfs, followed by folding the symmetrized distribution onto
the reduced discrete angular support $\Omega_\Theta$. Under this construction, directions in $\Omega_\Phi$ that differ by an integer multiple of $2\pi/p$ are mapped to the same point in $\Omega_\Theta$.


\noindent\textbf{Note.} The particular representation of 
$\Omega_\Phi \subset [0,2\pi)$ is adopted for convenience. More generally, 
the underlying discrete circular distribution may be defined on a rotated 
equally spaced circular support
\[
\left\{
\delta+\frac{2\pi k}{m}:k=0,\ldots,m-1
\right\},
\]
where the angles are interpreted modulo $2\pi$ and may be represented on any interval of length $2\pi$, such as $[0,2\pi)$ or $(-\pi,\pi]$. Here, $\delta$ determines the location of the first grid point, with successive support points separated by $2\pi/\mathrm{m}$.


\subsection{Wrapping of a Linear Variable}\label{sub.2.2}
Let $\mathbb{Z}$ denotes the set of integers, and let $X$ be a discrete random variable defined on $\mathbb{Z}$ with the pmf $P_X$. Then, the corresponding discrete wrapped pmf for the angular random variable $\Theta$, defined on the support $\Omega_\Theta = \left\{ \frac{2\pi k}{p \mathrm{m}}: k \in \mathbb{Z}_\mathrm{m} \right\}$, where $ \mathbb{Z}_{\mathrm{m}}$ is a cyclic group of integers of order $\mathrm{m}$, is constructed as follows: 

\begin{equation}\label{eq:5.1}
 P_{\Theta}\left(\frac{2 \pi k}{p \mathrm{m}}\right) = \sum_{j = -\infty}^{\infty} P_{X}(k + \mathrm{m}j).   
\end{equation}
 The function given in \eqref{eq:5.1} is non-negative, and its sum over the entire support evaluates to
$$ \sum_{k = 0}^{\mathrm{m}-1} P_{\Theta}\left(\frac{2 \pi k}{p \mathrm{m}}\right) = \sum_{k = 0}^{\mathrm{m}-1} \sum_{j = -\infty}^{\infty} P_{X}(k + \mathrm{m}j) = \sum_{x = -\infty}^{\infty} P_{X}(x) = 1. $$
Therefore, $P_\Theta$ is a valid pmf for $\Theta$.

Here, it is noteworthy that the probability masses depend only on the grid size $\mathrm{m}$ and the underlying pmf $P_X$. Specifically, $\Omega_{\Theta}$ consists of $\mathrm{m}$ equally spaced categories that lie in $[0,2\pi/p)$, and, for fixed $\mathrm{m}$ and $P_X$, changing $p$ changes the locations and spacing of these angular categories, while the probability masses associated with the category indices $k\in\mathbb{Z}_{\mathrm{m}}$ remain unchanged.

The two constructions presented in this section provide general methods for obtaining discrete distributions on finite angular supports for modelling observations exhibiting $p$-fold rotational symmetry. In the following section, particular distributions obtained from these constructions are considered.

\section{Particular Discrete $p$--Fold Models}\label{sec:3}
This section presents distributions derived from the methodology developed in Section~\ref{sec:2}. Their special cases and corresponding parameter estimation methods are also discussed.

\subsection{Generalized Wrapped Geometric-Based $p$-Fold Distribution}\label{sub:3.1}
 The generalized wrapped geometric distribution (GWG), introduced by \citet{Jha2018}, is given by

\begin{equation}\label{eq:3.1}
P_{\Phi}\left(\frac{2 \pi k}{\mathrm{m}}; a, q\right)  = C \big(q^\zeta + q^{\mathrm{m}-\zeta}\big),
\end{equation}
where \[C = \frac{(1 - q)}{\left(q^b + q^{1-b}\right)\left(1 - q^{\mathrm{m}}\right)},\]
$\zeta = (k-a)\bmod\mathrm{m},~ a \in [0, \mathrm{m})$, and $b$ is the fractional part of $a$, i.e., $a - \lfloor a \rfloor$, and $\mathrm{m}$ is a positive integer. Here the pmf $P_{\Phi}$ is defined on the set $\{2 \pi k/\mathrm{m} : k \in \mathbb{Z}_{\mathrm{m}}\}$, and its mode occurs at either $2\pi \lfloor a \rfloor/\mathrm{m}$ or $2\pi \lfloor a+1 \rfloor/\mathrm{m}$ or both.

Consider the reduced angular support

\[\Omega_\Theta = \left\{ \left( \frac{2 \pi k}{\mathrm{m}} \right) \bmod \frac{2\pi}{p} : k \in \mathbb{Z}_{\mathrm{m}} \right\}, \]
where $\mathrm{m} = \mathrm{h}p$. $\Omega_\Theta$ consists of $\mathrm{h}$ equally spaced angular points. 
Since $\mathrm{m}=\mathrm{h}p$, this support can be represented as
\[\Omega_\Theta = \left\{
\theta_j=\frac{2\pi j}{\mathrm{m}}:
j=0,\ldots,\mathrm{h}-1 \right\}, \]
and therefore consists of $\mathrm{h}$ equally spaced angular points.
Moreover, for every $\theta_j\in\Omega_\Theta$ and
$r=0,\ldots,p-1$,
\[
\theta_j+r\frac{2\pi}{p} = \frac{2\pi(j+r\mathrm{h})}{\mathrm{m}}
\in\Omega_\Phi.
\]
Therefore, applying the construction in Section~\ref{sub.2.1}, the pmf on
$\Omega_\Theta$ is given by

\begin{equation*}
P_\Theta(\theta_j) = \frac{1}{p} \sum_{r=0}^{p-1} \sum_{s=0}^{p-1}
P_\Phi\left(\theta_j+r\frac{2\pi}{p};
\left(a+s\frac{\mathrm{m}}{p}\right)\bmod\mathrm{m}, q\right).
\end{equation*} 
Let $\eta_j = (j-a) \bmod \mathrm{h}$ and $a =\lfloor a\rfloor + b$, then the closed-form pmf on $\Omega_\Theta$ is given by

\begin{equation} \label{eq:3.2} 
P_\Theta(\theta_j) = C_{\mathrm{h}}\left(q^{\eta_j}+q^{\mathrm{h}-\eta_j}\right),
\end{equation}
where $\mathrm{h}=\mathrm{m}/p$, and
\[
C_{\mathrm{h}} = \frac{1-q}
{\left(q^b+q^{1-b}\right)\left(1-q^{\mathrm{h}}\right)}.
\]    
We refer to the distribution in \eqref{eq:3.2} as the GWG-based $p$-fold (GWGB-$p$) distribution obtained by symmetrization and folding of a GWG distribution. Figure~\ref{fig:1.1} provides a schematic illustration of these two steps. This distribution is appropriate for modelling data that exhibit $p$--fold rotational symmetry.

\begin{figure}
    \centering
    \includegraphics[width= 1\linewidth]{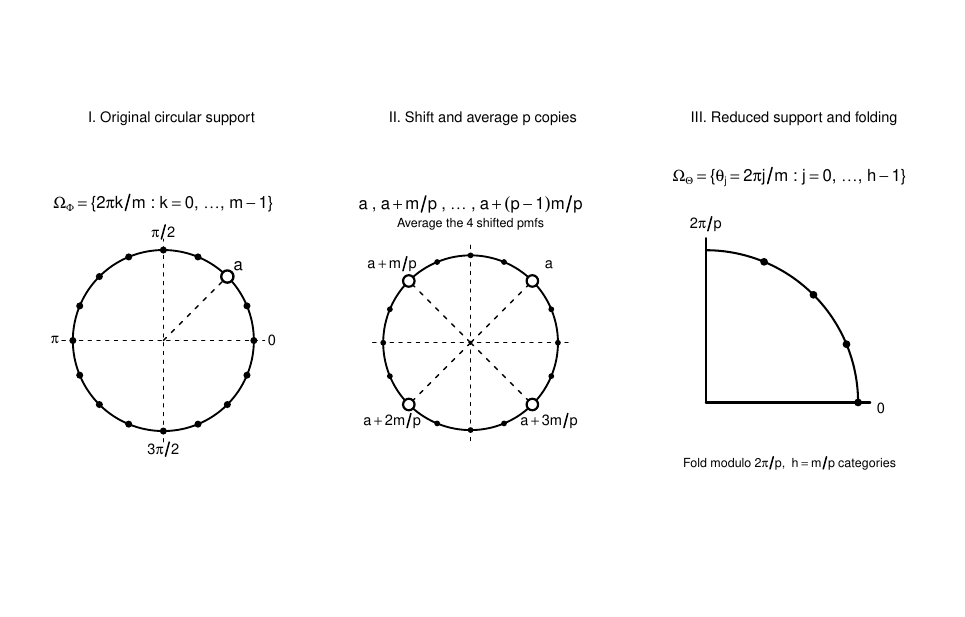}
    \vspace{-2cm}
   \caption{Schematic representation of the symmetrization and folding construction for a discrete circular distribution with $p=4,~\mathrm{m} = 16$ and $a = 2$. The rotational location parameter is shifted by multiples of $2\pi/p$, the resulting four shifted pmfs are averaged, and the symmetrized distribution is folded onto the reduced discrete angular support $\Omega_{\Theta}$, which contains $\mathrm{h}=\mathrm{m}/p$ grid points.}
    \label{fig:1.1}
\end{figure}

\subsubsection{Discussion of the Model Characteristics}\label{sec:3.1.1}
The model is periodic in $a$ with period $\mathrm{m}/p$, therefore, without loss of generality, we restrict $a\in[0,\mathrm{h})$. The principal modal characteristic of the underlying distribution is preserved, that is, the pmf in \eqref{eq:3.2} attains its largest probability mass at one of the two neighboring directions of the principal direction $2\pi a/\mathrm{m}$. Furthermore, if $a$ is an integer, then it attains its highest probability mass exactly at this principal direction. This preservation follows directly from the proposed framework. This is because the proposed framework does not merely create multiple copies of the underlying distribution by rotating only the variable $\phi$, but also generates corresponding copies with location parameters shifted by multiples of $\mathrm{m}/p$.  


\noindent\textbf{Limiting cases.}
The limiting behavior of the distribution provides a useful interpretation of the concentration parameter $q$.

\begin{itemize}

    \item As $q \to 1$,
    \[
    P_\Theta(\theta_j)\longrightarrow \frac{1}{h},
    \qquad j=0,\ldots,h-1.
    \]
    Thus, the distribution converges to the discrete uniform distribution
    on the reduced angular support.

    \item As $q \to 0$, if $a$ is an integer, then
    \[
    \lim_{q\to0}P_\Theta(\theta_j)
    =
    \begin{cases}
        1, & j=a \pmod h,\\
        0, & \text{otherwise}.
    \end{cases}
    \]
    Hence, the distribution becomes degenerate at the principal direction
    $2\pi a/m$.

    \item As $q \to 0$, if $a$ is non-integer, write
\[
a=\ell+b,
\qquad
\ell=\lfloor a\rfloor,
\qquad
0<b<1.
\]
In this case, only the two categories neighboring $a$ can retain
positive limiting probability. Let
\[
\lambda
=
\lim_{q\to0}
\frac{q^{\,1-b}}
     {q^b+q^{\,1-b}}.
\]
Then
\[
\lim_{q\to0}P_\Theta(\theta_j)
=
\begin{cases}
1-\lambda, & j=\ell,\\[1mm]
\lambda, & j=(\ell+1)\bmod h,\\[1mm]
0, & \text{otherwise}.
\end{cases}
\]
Since
\[
\lambda
=
\begin{cases}
0, & 0<b<\dfrac{1}{2},\\[2mm]
\dfrac{1}{2}, & b=\dfrac{1}{2},\\[2mm]
1, & \dfrac{1}{2}<b<1,
\end{cases}
\]
the limiting distribution can equivalently be written as
\[
\lim_{q\to0}P_\Theta(\theta_j)
=
\begin{cases}
1, & j=\ell,\quad 0<b<\dfrac{1}{2},\\[2mm]
\dfrac{1}{2},
& j\in\{\ell,(\ell+1)\bmod h\},
  \quad b=\dfrac{1}{2},\\[2mm]
1, & j=(\ell+1)\bmod h,\quad
\dfrac{1}{2}<b<1,\\[2mm]
0, & \text{otherwise}.
\end{cases}
\]

Thus, the limiting distribution is concentrated at the discrete
category nearest to $a$. If $a$ lies exactly halfway between two
adjacent categories, the limiting probability is equally divided
between them.

\end{itemize}

The GWGB-$p$ distribution includes several important special cases. When $p=1$, GWGB-1 reduces to the underlying GWG distribution. When $p=2$, GWGB-2 is a discrete axial distribution exhibiting antipodal symmetry. More generally, GWGB-$p$ provides a discrete model for directional data exhibiting $p$-fold rotational symmetry.

\subsubsection{Parameter Estimation}
In this section, we consider methods for estimating the parameters of the
GWGB-$p$ distribution. Suppose that
$\theta_1,\theta_2,\ldots,\theta_n$ are a random sample of $n$ observations
from the GWGB-$p$ distribution with parameters $a$ and $q$. We consider the
following estimation methods.

\noindent
\textbf{Method of moments:} 
To estimate the parameters, we derive the population $p$th trigonometric moment
$\operatorname{E}(\operatorname{e}^{ip\Theta})$. Since
$\operatorname{e}^{ip\theta}=\operatorname{e}^{ip\left(\theta+2\pi/p\right)},$
this moment is appropriate for observations exhibiting $p$-fold rotational symmetry. The $p$th trigonometric moment of the GWGB-$p$ distribution given in \eqref{eq:3.2} is given by

\begin{equation}\label{eq:4.1}
\operatorname{E}(\operatorname{e}^{ip\Theta})
=\operatorname{e}^{i\omega (a-b)}\Bigg(\frac{(1-q)
\left[q^b-q^{2-b}+\operatorname{e}^{i\omega}\left(q^{1-b}-q^{1+b}\right)
\right]}{\left(1+q^2-2q\cos\omega\right)
\left(q^b+q^{1-b}\right)}
\Bigg),
\end{equation}
where
$\omega= 2\pi p / \mathrm{m}$, and $b=a-\lfloor a\rfloor$.

Given a sample of $n$ observations exhibiting $p$-fold rotational symmetry, the sample $p$th trigonometric moment is given by

\begin{equation}\label{eq:4.2}
\bar{\phi}_p = \frac{1}{n}\sum_{j=1}^{n}\operatorname{e}^{ip\theta_j}.
\end{equation}

By the law of large numbers, the sample moment \eqref{eq:4.2} converges to the population moment \eqref{eq:4.1} as $n\rightarrow\infty$. The parameters $a$ and $q$ can be estimated simultaneously by equating the sample moment to the population moment,
\[\bar{\phi}_p = \operatorname{E}(\operatorname{e}^{ip\Theta}).\]
Equivalently, when the relevant moments are nonzero, the complex moment 
equation can be expressed in terms of its magnitude and argument. From 
\eqref{eq:4.1}, the magnitude of the population moment is 
\[
\left|\operatorname{E}(\operatorname{e}^{ip\Theta})
\right| =\frac{(1-q)\sqrt{\left(q^b-q^{2-b}\right)^2
+\left(q^{1-b}-q^{1+b}\right)^2+2\left(q^b-q^{2-b}\right)\left(q^{1-b}-q^{1+b}\right)\cos\omega}}{\left(1+q^2-2q\cos\omega\right)\left(q^b+q^{1-b}\right)
}.
\]

\noindent
Provided that 
$\operatorname{E}(\operatorname{e}^{ip\Theta})\neq0$, its argument is
\[
\operatorname{arg}
\left\{\operatorname{E}(\operatorname{e}^{ip\Theta})\right\}
=
\left(\omega a+\psi(q,b)\right)\bmod 2\pi,
\]
where 
\[
\psi(q,b) = -\omega b +\operatorname{atan2}
\left(\left(q^{1-b}-q^{1+b}\right)\sin\omega,\,
q^b-q^{2-b} + \left(q^{1-b}-q^{1+b}\right)\cos\omega \right).
\]
Here, $\operatorname{atan2}(y,x)$ denotes the four-quadrant arctangent, 
giving the argument of $x+iy$ in $(-\pi,\pi]$. When both the sample and 
population moments are nonzero, the method-of-moments estimates of $a$ and 
$q$ may be obtained by solving
$|\bar{\phi}_p| = \left|\operatorname{E}(\operatorname{e}^{ip\Theta})\right|$
and
$\operatorname{arg}(\bar{\phi}_p) = \left(\omega a+\psi(q,b)\right)\bmod 2\pi$
simultaneously, where $b=a-\lfloor a\rfloor$.

By the continuous mapping theorem \citep[Theorem~2.3]{van2000}, if
$\operatorname{E}(\operatorname{e}^{ip\Theta})\neq0$, then, as 
$n\rightarrow\infty$, the magnitude and circular phase of the sample moment 
converge to the corresponding magnitude and circular phase of the population 
moment. The method-of-moments estimator 
may be defined through the minimum-distance criterion
\[
(\hat a_{\mathrm{MoM}},\hat q_{\mathrm{MoM}})
=
\underset{a,q}{\operatorname{arg\,min}}\,
\left|
\bar{\phi}_p-
\operatorname{E}_{a,q}
\left(\operatorname{e}^{ip\Theta}\right)
\right|^2.
\]
Since the complex moment equation provides two real moment equations for the
two unknown parameters $a$ and $q$, whenever an exact solution exists in the
parameter space, the minimum value of the above criterion is zero and the
estimator is equivalent to solving the moment equations directly. Otherwise,
the criterion provides the closest admissible match to the sample moment.

 
\noindent
\textbf{Maximum likelihood estimation:}
 
Let
$
\frac{2\pi j_1}{\mathrm{m}},
\frac{2\pi j_2}{\mathrm{m}},
\ldots,
\frac{2\pi j_n}{\mathrm{m}},
\qquad
j_i\in\{0,\ldots,h-1\},\quad i=1,\ldots,n,
$
be a random sample from the GWGB-$p$ distribution, where
$h=\mathrm{m}/p$. The likelihood function is
\begin{align*}
L(a,q) = C_{\mathrm{h}}^{\,n} \prod_{i=1}^{n} \left( q^{\eta_i}+q^{\mathrm{h}-\eta_i} \right), 
\end{align*} 
where $ \eta_i=(j_i-a)\bmod \mathrm{h}.$ The log-likelihood function is

\begin{align*} 
\ell(a,q) ={}& n\log(1-q) - n\log\left(q^b+q^{1-b}\right) - n\log\left(1-q^{\mathrm{h}}\right) \nonumber\\ &\quad+ \sum_{i=1}^{n} \log\left( q^{\eta_i}+q^{\mathrm{h}-\eta_i} \right). 
\end{align*}
Since closed-form estimators are unavailable, the log-likelihood is maximized numerically over the parameter space, equivalently by minimizing the negative log-likelihood, using the \texttt{L-BFGS-B} algorithm implemented through the built-in \texttt{optim} function in \textsf{R}. This algorithm is used because it accommodates box constraints and is therefore well suited to the bounded parameter spaces of the proposed models.

Provided that the maximum likelihood estimate satisfies
\[
\hat a\in
\bigcup_{r=0}^{h-1}(r,r+1)
\qquad\text{and}\qquad
\hat q\in(0,1),
\]
so that the log-likelihood is locally differentiable at the estimate, the
covariance matrix of the maximum likelihood estimator is estimated by
\[
\widehat{\operatorname{Var}}(\hat{\boldsymbol{\tau}})
=
H_{\mathrm{NLL}}(\hat{\boldsymbol{\tau}})^{-1},
\]
where $\boldsymbol{\tau}=(a,q)^{\mathsf T}$ and
$H_{\mathrm{NLL}}(\hat{\boldsymbol{\tau}})$ denotes the Hessian matrix of
the negative log-likelihood evaluated at the maximum likelihood estimate.
The standard error of the $i$th parameter estimate is then
\[
\operatorname{SE}(\hat{\tau}_i)
=
\sqrt{
\left[
\widehat{\operatorname{Var}}(\hat{\boldsymbol{\tau}})
\right]_{ii}
}.
\]

\subsection{Skew Discrete Laplace-Based $p$-Fold Distribution}\label{sub:3.2}

The GWGB-$p$ distribution introduced in the preceding subsection uses a
common decay parameter on the two sides of its modal direction. To allow
the rates of decay to differ, we use the skew discrete Laplace distribution
on $\mathbb{Z}$ as the underlying linear distribution \citep{kozubowski2006}. Let $X$ be an integer-valued random variable with pmf,

\begin{equation*}
P_X(x;q_1,q_2) = \frac{(1-q_1)(1-q_2)}{1-q_1q_2}
\begin{cases}
 q_1^x, & x=0,1,2,\ldots,\\[1mm]
q_2^{-x}, & x=-1,-2,\ldots,
\end{cases}
\end{equation*}
where $q_1,~q_2\in(0,1)$. The parameters $q_1$ and $q_2$ control the geometric decay of the positive and negative tails, respectively. Applying the wrapping construction described in Section~\ref{sub.2.2} to the skew discrete Laplace distribution gives

\begin{align*}
P_\Theta\left(\frac{2\pi k}{p\mathrm{m}}\right)
&=
\sum_{j=-\infty}^{\infty}
P_X(k+mj;q_1,q_2)\\
&=
\frac{(1-q_1)(1-q_2)}{1-q_1q_2}\left\{
\sum_{j=0}^{\infty}q_1^{k+mj}
+
\sum_{j=0}^{\infty}q_2^{m-k+mj}
\right\}.
 \end{align*}
The closed--form expression of this distribution is as follows:
 \begin{equation}\label{eq:4.6}
 P_\Theta \left(\frac{2 \pi k}{p\mathrm{m}}\right) = \frac{(1-q_1)(1-q_2)}{1-q_1q_2} \left( \frac{q_1^k}{1-q_1^{\mathrm{m}}} + \frac{q_2^{\mathrm{m}-k}}{ 1 - q_2^{\mathrm{m}}} \right).
 \end{equation}
\autoref{fig:1.2} presents a schematic illustration of the wrapping construction corresponding to the distribution in \eqref{eq:4.6}.
Notably, when $p = 1$, the distribution given in \eqref{eq:4.6} is the wrapped skew Laplace distribution \citep{jayakumar2012}.  

\begin{figure}[!ht]
    \centering
    \includegraphics[width=1\linewidth]{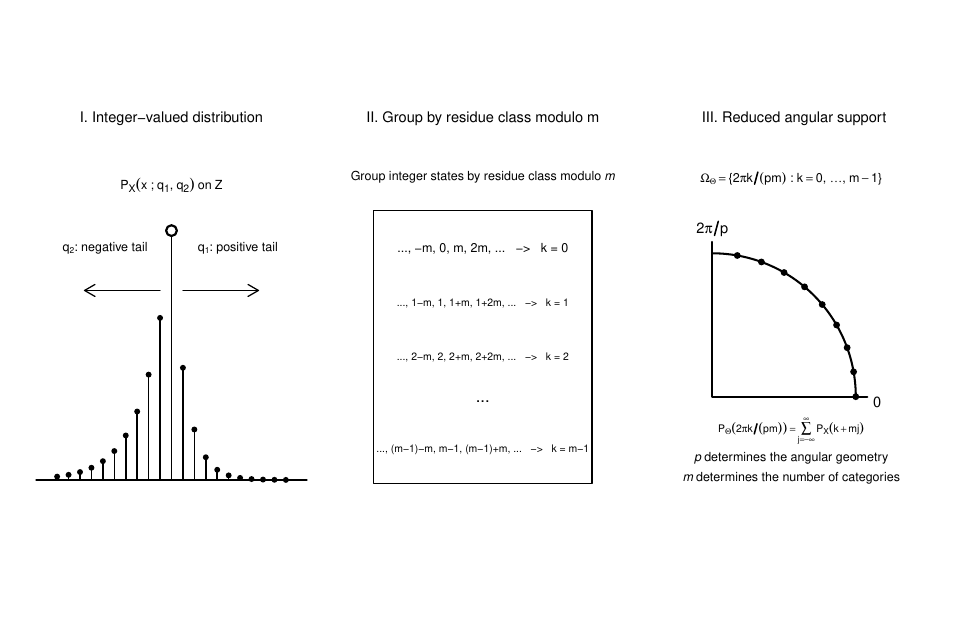}\vspace{-1.5cm}
    \caption{Schematic representation of the wrapping construction for a discrete linear distribution. Integer-valued states are grouped according to their residue classes modulo $\mathrm{m}$, and the probability masses within each residue class are summed and mapped to the corresponding category on the discrete angular support $\Omega_\Theta \subset [0,2\pi/p)$. Here, $\mathrm{m}$ determines the number of angular categories, whereas $p$ determines the angular geometry of the support. The illustration uses $p=4$, $\mathrm{m}=9$, $a=0$, $q_1=0.45$, and $q_2=0.65$.}
    \label{fig:1.2}
\end{figure}

The distribution in \eqref{eq:4.6} can be generalized by introducing a continuous shift parameter $a \in[0, \mathrm{m})$. It is formulated as follows:

\begin{equation}\label{eq:4.7}
  P_\Theta \left(\frac{2 \pi k}{p\mathrm{m}}\right) = C \left( \frac{q_1^{\zeta}}{1 - q_1^{\mathrm{m}}} + \frac{q_2^{\mathrm{m} - \zeta}}{1 - q_2^{\mathrm{m}}} \right), 
\end{equation} 
where $\zeta = (k - a) \bmod \mathrm{m}$, and $C$ is the normalizing constant. The reciprocal of $C$ is given by
\begin{equation*}
  \frac{1}{C} =  \frac{q_1^{1-b}}{1-q_1} +  \frac{q_2^{b}}{1-q_2},
\end{equation*}
where $b$ is the fractional part of $a$.
We refer to the distribution given in \eqref{eq:4.7} as the skew discrete Laplace-based $p$-fold (SDLB-$p$) distribution. For example, SDLB-$2$
and SDLB-$4$ denote the axial and four-fold cases, respectively.

When $q_1=q_2=q$, the pmf in \eqref{eq:4.7} reduces to
\[
P_\Theta\left(\frac{2\pi k}{p\mathrm{m}}\right)
=
\frac{1-q}
     {(q^b+q^{1-b})(1-q^{\mathrm{m}})}
\left(
q^\zeta+q^{\mathrm{m}-\zeta}
\right).
\]
It is noteworthy that, if the number $\mathrm{m}$ of reduced categories in the
SDLB-$p$ construction is identified with $h=\mathrm{m}/p$ in the GWGB-$p$
construction, then the two distributions are defined on the same reduced angular
support. Under this identification, they have the same probability mass function
when $q_1=q_2=q$. 

\subsubsection{Discussion of the Model Characteristics} \label{sub:3.2.1}
To study the modal structure, we define 
\[
g(x) = \frac{q_1^x}{1-q_1^\mathrm{m}} + \frac{q_2^{\mathrm{m}-x}}{1-q_2^\mathrm{m}},
\qquad x\in [0, \mathrm{m}].
\]
Since $q_1,~q_2 \in (0, 1)$, the second derivative of $g$ on $(0, \mathrm{m})$ is 
\[
g''(x) = \frac{(\log q_1)^2q_1^x}{1-q_1^m}
  + \frac{(\log q_2)^2q_2^{m-x}}{1-q_2^m}>0.
\]

Therefore, $g$ is strictly convex on $[0, \mathrm{m}]$, and hence its maximum is attained at one (or both) of the endpoints. Since $g(0) = g(\mathrm{m})$, it follows that $g(x) < g(0)$ for all values $x \in (0, \mathrm{m})$. Consequently, the pmf attains its largest probability mass at the direction corresponding to $\zeta = 0$, namely the principal direction $2\pi a/p\mathrm{m}$, whenever $a$ is an integer. For non-integer values of $a$, $\zeta = 0$ is unattainable, and thus the largest probability mass is attained at one or both of the two neighboring directions of $2\pi a/p\mathrm{m}$. 

The strict convexity of $g$ also determines the antimodal structure. In the symmetric case $q_1=q_2=q$,
\[
g(x)=\frac{q^x+q^{\mathrm{m}-x}}{1-q^{\mathrm{m}}},
\]
and its first derivative is
\[
g'(x)=\frac{\log q}{1-q^{\mathrm{m}}}
\left(
q^x-q^{\mathrm{m}-x}
\right).
\]
The critical point is obtained by setting $g'(x)=0$, which gives $x = \mathrm{m}/2$.
Because $g$ is strictly convex, this critical point is the unique global
minimum of $g$ on $[0,\mathrm{m}]$.
If $\mathrm{m}/2$ is attainable among the values
\[
\zeta=(k-a)\bmod \mathrm{m},
\qquad k\in\mathbb{Z}_{\mathrm{m}},
\]
then the pmf attains its minimum at $\zeta=\mathrm{m}/2$. A displacement of
$\mathrm{m}/2$ in the category index corresponds to an angular displacement
\[
\frac{2\pi}{p\mathrm{m}}
\frac{\mathrm{m}}{2}
=
\frac{\pi}{p}.
\]
Hence, in this case, the antimodal direction is
\[
\mu_{\mathrm{anti}}
=
\left(
\frac{2\pi a}{p\mathrm{m}}
+
\frac{\pi}{p}
\right)
\bmod
\frac{2\pi}{p}.
\]

If $\mathrm{m}/2$ is not attainable, the minimum probability mass is attained
at the category or categories whose corresponding values of $\zeta$ are
closest to $\mathrm{m}/2$.

The parameters $q_1$ and $q_2$ govern the decay of the probability mass
in the two rotational directions away from the principal direction. For the
first term,
\[
\frac{q_1^{x+1}/(1-q_1^\mathrm{m})}
     {q_1^x/(1-q_1^\mathrm{m})} = q_1,
\]
and for the second term,
\[
\frac{q_2^{\mathrm{m}-(x-1)}/(1-q_2^\mathrm{m})}
     {q_2^{\mathrm{m}-x}/(1-q_2^\mathrm{m})} = q_2.
\]
Thus, $q_1$ and $q_2$ primarily control the geometric decay of the probability mass on the two sides of the principal direction. Smaller values of $q_i$ produce faster decay on the corresponding side and greater concentration around the principal direction, whereas values closer to one produce slower decay and greater dispersion.

Depending on the value of $p$, the SDLB-$p$ distribution is suitable for modelling several types of directional data. When $p=1$, SDLB-$1$ is the discrete circular distribution. When $p=2$, SDLB-$2$ is the discrete axial distribution appropriate for modelling discrete axial data exhibiting antipodal symmetry. More generally,
SDLB-$p$ is appropriate for discrete directional data exhibiting $p$-fold
rotational symmetry.

\noindent\textbf{Limiting cases.}
Since $q_1,q_2\in(0,1)$, we evaluate the limits of the pmf as $(q_1,q_2)$ approaches the boundary points $(0,0)$, $(0,1)$, $(1,0)$, and $(1,1)$.

\begin{itemize}

\item If $q_1\to1$ or $q_2\to1$, then
\[P_\Theta\left(\frac{2\pi k}{p\mathrm{m}}\right)
\longrightarrow \frac{1}{\mathrm{m}},
\qquad k=0,\ldots,\mathrm{m}-1.\]
Thus, the distribution converges to the discrete uniform distribution on the reduced angular support.

\item As $(q_1,q_2)\to(0,0)$, if $a$ is an integer, then
\[\lim_{(q_1,q_2)\to(0,0)}
P_\Theta\left(\frac{2\pi k}{p\mathrm{m}}\right)
= \begin{cases}
    1, & k=a\bmod \mathrm{m},\\
    0, & \text{otherwise}.
\end{cases}\]
Hence, the distribution becomes degenerate at the principal direction $2\pi a/(p\mathrm{m})$.

\item As $(q_1,q_2)\to(0,0)$, if $a$ is non-integer, write
    \[ a=\ell+b,
    \qquad \ell=\lfloor a\rfloor,
    \qquad 0<b<1. \]
In this case, only the two categories neighboring $a$ can retain positive limiting probability. If
    \[\lambda =
    \lim_{(q_1,q_2)\to(0,0)}
    \frac{q_1^{\,1-b}}
         {q_1^{\,1-b}+q_2^{\,b}}\]
exists, then
    \[\lim_{(q_1,q_2)\to(0,0)}
    P_\Theta\left(\frac{2\pi k}{p\mathrm{m}}\right)
    =\begin{cases}
    1-\lambda, & k=\ell,\\[1mm]
    \lambda, & k=(\ell+1)\bmod \mathrm{m},\\[1mm]
    0, & \text{otherwise}.
    \end{cases}\]
Therefore, for non-integer $a$, the limiting distribution at $(q_1,q_2)=(0,0)$ generally depends on the relative rates at which $q_1$ and $q_2$ approach zero.

In particular, if $q_1=q_2=q\to0$, the limiting distribution reduces to
\[\lim_{q\to0}
P_\Theta\left(\frac{2\pi k}{p\mathrm{m}}\right)
 = \begin{cases}
 1,
 & k=\ell,\quad 0<b<\dfrac{1}{2},\\[2mm]
 \dfrac{1}{2},
 & k \in \{\ell, (\ell+1)\bmod \mathrm{m} \}\quad b=\dfrac{1}{2},\\[2mm]
 1,
 & k=(\ell+1)\bmod \mathrm{m},\quad \dfrac{1}{2}<b<1,\\[2mm]
 0,
 & \text{otherwise}.
 \end{cases} \]
Thus, along the equal-rate path, the distribution concentrates on
the category nearest to $a$, with equal limiting probability on the two neighboring categories when $b=1/2$.

\end{itemize}

\subsubsection{Parameter Estimation}  

\textbf{Method of Moments Estimation.}
The trigonometric moments can be calculated to facilitate parameter estimation via the method of moments. For observations exhibiting $p$--fold rotational symmetry, the appropriate complex representation is $Z=e^{ip\Theta}$, because angles differing by $2\pi/p$ have the same value of $Z$. For a positive integer $r$, the $r$th $p$--fold trigonometric moment is given by

\begin{equation}\label{eq:2.3.1}
\operatorname{E}(Z^r) = C e^{i\gamma_r a}\left(q_1^{1-b}
\frac{e^{i\gamma_r(1-b)}}{1-q_1e^{i\gamma_r}}+q_2^b
\frac{e^{-i\gamma_r b}}{1-q_2e^{-i\gamma_r}}\right),
\end{equation}
where
$\gamma_r=\frac{2\pi r}{\mathrm{m}},$ and $b$ is the fractional part of $a$.

Let
$2\pi k_1/p\mathrm{m},2\pi k_2/p\mathrm{m},\dots,2\pi k_n/p\mathrm{m}$
be a random sample of $n$ observations on the discrete grid. The sample $r$th $p$--fold trigonometric moment is given by
\begin{equation*}
\hat{\phi}_r =
\frac{1}{n}\sum_{j=1}^n e^{i\gamma_r k_j} = \bar{C}_r+i\bar{S}_r,
\end{equation*}
where
\[
\bar{C}_r = \frac{1}{n}\sum_{j=1}^n\cos(\gamma_r k_j)
\quad\text{and}\quad
\bar{S}_r = \frac{1}{n}\sum_{j=1}^n\sin(\gamma_r k_j).
\]

For all values of $q_1$, $q_2$, and $a$, the moments given in
\eqref{eq:2.3.1} are finite. By the law of large numbers, the sample
moments converge to their corresponding population moments. Since the model contains three unknown parameters, we restrict attention to $\mathrm{m}\geq4$, and a minimum-distance method of moments estimator based on the first two $p$--fold trigonometric moments may be defined as
\begin{equation}
\widehat{\boldsymbol{\tau}}_{\mathrm{MoM}}
=
\arg\min_{a,q_1,q_2}
\sum_{r=1}^{2}
\left[
\left\{\bar{C}_r-\alpha_r(\boldsymbol{\tau})\right\}^{2}
+
\left\{\bar{S}_r-\beta_r(\boldsymbol{\tau})\right\}^{2}
\right],
\end{equation}
where
\[
\boldsymbol{\tau}=(a,q_1,q_2)^{\mathsf T},
\qquad
\operatorname{E}(Z^r)
=
\alpha_r(\boldsymbol{\tau})
+
i\beta_r(\boldsymbol{\tau}).
\]
Owing to the nonlinear form of the objective function, the minimization would need to be carried out numerically.

\noindent
\textbf{Maximum Likelihood Estimation.}
As an alternative to the method of moments, we consider maximum likelihood estimation (MLE). Given a sample of $n$ independent observations, the likelihood function is formulated as:

\begin{equation}
  L(\boldsymbol{\tau}) = \prod_{j=1}^n C \left( \frac{q_1^{\zeta_j}}{1 - q_1^{\mathrm{m}}} + \frac{q_2^{\mathrm{m} - \zeta_j}}{1 - q_2^{\mathrm{m}}} \right),
\end{equation}
where $\zeta_j = (k_j-a)\bmod \mathrm{m}$ and $\boldsymbol{\tau}=(a,q_1,q_2)^{\mathsf T}$. The corresponding log-likelihood function, $\ell(\boldsymbol{\tau}) = \log L(\boldsymbol{\tau})$, is expressed as:
\begin{equation}
  \ell(\boldsymbol{\tau}) = n \log C + \sum_{j=1}^n \log \left( \frac{q_1^{\zeta_j}}{1 - q_1^{\mathrm{m}}} + \frac{q_2^{\mathrm{m} - \zeta_j}}{1 - q_2^{\mathrm{m}}} \right).
\end{equation}
Due to the complex structural form of the log-likelihood function, the score equations do not yield closed-form estimators. Therefore, we estimate the parameters by numerically maximizing the log-likelihood function.


\section{Simulation}\label{sec:5}
To evaluate the finite-sample performance of the proposed estimation methods, 
we conduct a simulation study for the GWGB-$p$ and SDLB-$p$ distributions using both the method of moments (MoM) and maximum likelihood estimation (MLE). Samples of sizes $n=200$ and $500$ are generated under different combinations of the model parameters and grid sizes. For each setting, the results are based on $N=1000$ independent Monte Carlo replications. The empirical mean and standard deviation of the parameter estimates are computed across the replications to assess the accuracy of the two estimation methods.

For the GWGB-$p$ model, the category probabilities depend on $p$ and $\mathrm{m}$
through $\mathrm{h}=\mathrm{m}/p$. Thus, settings with the same value of $\mathrm{h}$ have the same probability structure but different angular interpretations. Similarly, for
the SDLB-$p$ model, the category probabilities remain unchanged when $p$
varies while $\mathrm{m}$ and the model parameters are fixed. In this case, $p$
changes only the angular embedding of the categories.

For the linear parameters $q$, $q_1$, and $q_2$, the empirical mean and 
standard deviation are calculated using the usual arithmetic summaries. 
However, the location parameter $a$ is periodic rather than linear, and it lies 
in an interval $[0, L)$, where $L=\mathrm{h}=\mathrm{m}/p$ for the GWGB-$p$ distribution and $L=\mathrm{m}$ for the SDLB-$p$ distribution.

Since the location parameter $a$ is periodic on $[0,L)$, its empirical
mean and standard deviation were calculated using circular statistics.
For the $i$th simulation replicate, let $\hat a_i$ denote the estimate
of $a$, and define the corresponding angular value as
\[
\theta_i=\frac{2\pi\hat a_i}{L},
\qquad i=1,\ldots,N.
\]
The mean resultant components and resultant length are
\[
\bar C=\frac{1}{N}\sum_{i=1}^{N}\cos\theta_i,
\qquad
\bar S=\frac{1}{N}\sum_{i=1}^{N}\sin\theta_i,
\qquad
\bar R=\sqrt{\bar C^2+\bar S^2}.
\]
The circular mean and circular standard deviation, expressed on the
original scale of $a$, are then given by
\[
\bar a
=
\left[
\frac{L}{2\pi}
\operatorname{atan2}(\bar S,\bar C)
\right]\bmod L,
\qquad
\mathrm{SD}_a
=
\sqrt{-2\log\bar R}.
\]
Here, $L$ denotes the period of the location parameter $a$.

Under this periodicity, $0$ and $L$ represent the same location. Hence, when
the true value is $a=0$, an empirical circular mean close to $L$ should be
interpreted as lying close to $0$ across the periodic boundary, rather than as
indicating a large estimation error.

The simulation results are presented in Tables \ref{tab:4.1}--\ref{tab:4.4}. 
Tables \ref{tab:4.1} and \ref{tab:4.2} report the maximum likelihood 
estimation results for the GWGB-$p$ and SDLB-$p$ distributions, respectively. Tables \ref{tab:4.3} and \ref{tab:4.4} present the corresponding method of moments results for the GWGB-$p$ and SDLB-$p$ distributions, respectively.
For each estimation method, the tables present the empirical mean and standard 
deviation of the parameter estimates over the simulations, 
providing a direct assessment of the finite-sample accuracy of the MLE and MoM estimators. Tables \ref{tab:4.1}--\ref{tab:4.4} show that the variability of the parameter estimates decreases as the sample size increases. The precision of the estimate $\hat a$ depends strongly on the concentration parameters $q$, $q_1$, and $q_2$.

\begin{table}[!ht]
\caption{Maximum likelihood estimates for the GWGB-$p$ distribution.}
\label{tab:4.1}
\centering
\small
\renewcommand{\arraystretch}{0.85}

\begin{tabular}{ccccc}
\toprule
\multicolumn{2}{c}{True values} &
\multicolumn{1}{c}{Sample size} &
\multicolumn{2}{c}{MLE estimate and standard deviation} \\
\cmidrule(lr){1-2}
\cmidrule(lr){4-5}
$q$ & $a$ & $n$ & $\hat q$ & $\hat a$ \\
\midrule

\multicolumn{5}{c}{\textit{$p=2,\quad m=16$}} \\
\midrule

0.1 & 0.5 & 200 & 0.100 (0.019) & 0.500 (0.032) \\
    &     & 500 & 0.100 (0.013) & 0.501 (0.020) \\

0.3 & 0.5 & 200 & 0.299 (0.027) & 0.498 (0.060) \\
    &     & 500 & 0.300 (0.018) & 0.501 (0.037) \\

0.5 & 2.5 & 200 & 0.499 (0.039) & 2.503 (0.119) \\
    &     & 500 & 0.501 (0.025) & 2.502 (0.075) \\

0.7 & 2.5 & 200 & 0.699 (0.045) & 2.501 (0.387) \\
    &     & 500 & 0.706 (0.042) & 2.497 (0.258) \\

0.9 & 3.5 & 200 & 0.818 (0.048) & 3.519 (2.233) \\
    &     & 500 & 0.853 (0.042) & 3.624 (2.171) \\

\addlinespace[3pt]
\multicolumn{5}{c}{\textit{$p=4,\quad m=32$}} \\
\midrule

0.1 & 0.5 & 200 & 0.101 (0.021) & 0.499 (0.030) \\
    &     & 500 & 0.100 (0.013) & 0.502 (0.019) \\

0.3 & 0.5 & 200 & 0.300 (0.028) & 0.498 (0.060) \\
    &     & 500 & 0.299 (0.018) & 0.500 (0.039) \\

0.5 & 2.5 & 200 & 0.502 (0.033) & 2.497 (0.107) \\
    &     & 500 & 0.501 (0.026) & 2.498 (0.072) \\

0.7 & 2.5 & 200 & 0.702 (0.053) & 2.504 (0.414) \\
    &     & 500 & 0.704 (0.037) & 2.494 (0.222) \\

0.9 & 3.5 & 200 & 0.822 (0.049) & 3.349 (2.288) \\
    &     & 500 & 0.857 (0.042) & 3.352 (1.968) \\

\bottomrule
\end{tabular}
\end{table}

\begin{table}[htbp]
\caption{Maximum likelihood estimates for the SDLB-$p$ distribution.}
\label{tab:4.2}
\centering
\small
\setlength{\tabcolsep}{5pt}
\renewcommand{\arraystretch}{0.85}

\begin{tabular}{ccccccc}
\toprule
\multicolumn{3}{c}{True values} &
\multicolumn{1}{c}{Sample size} &
\multicolumn{3}{c}{MLE estimate and empirical standard deviation} \\
\cmidrule(lr){1-3}
\cmidrule(lr){5-7}
$q_1$ & $q_2$ & $a$ & $n$
& $\hat q_1$ & $\hat q_2$ & $\hat a$ \\
\midrule

\multicolumn{7}{c}{\textit{$p=2,\quad m=5$}} \\
\midrule

0.2 & 0.2 & 0.0 & 200
& 0.170 (0.064) & 0.177 (0.061) & 0.016 (0.210) \\
    &     &     & 500
& 0.183 (0.043) & 0.181 (0.043) & 4.996 (0.128) \\

0.2 & 0.8 & 0.0 & 200
& 0.244 (0.262) & 0.681 (0.218) & 4.990 (0.495) \\
    &     &     & 500
& 0.206 (0.187) & 0.751 (0.100) & 0.045 (0.344) \\

0.8 & 0.2 & 2.5 & 200
& 0.565 (0.289) & 0.416 (0.326) & 2.923 (0.727) \\
    &     &     & 500
& 0.644 (0.219) & 0.399 (0.293) & 2.846 (0.580) \\

0.8 & 0.8 & 2.5 & 200
& 0.512 (0.407) & 0.511 (0.410) & 2.491 (1.469) \\
    &     &     & 500
& 0.533 (0.408) & 0.558 (0.411) & 2.595 (1.294) \\

\addlinespace[4pt]
\multicolumn{7}{c}{\textit{$p=2,\quad m=8$}} \\
\midrule

0.1 & 0.1 & 3.5 & 200
& 0.099 (0.029) & 0.099 (0.029) & 3.498 (0.068) \\
    &     &     & 500
& 0.100 (0.018) & 0.100 (0.018) & 3.501 (0.040) \\

0.1 & 0.5 & 3.5 & 200
& 0.108 (0.079) & 0.494 (0.041) & 3.485 (0.217) \\
    &     &     & 500
& 0.106 (0.052) & 0.497 (0.027) & 3.487 (0.135) \\

0.5 & 0.1 & 6.5 & 200
& 0.495 (0.039) & 0.107 (0.071) & 6.509 (0.194) \\
    &     &     & 500
& 0.499 (0.022) & 0.101 (0.041) & 6.501 (0.097) \\

0.5 & 0.5 & 6.5 & 200
& 0.494 (0.099) & 0.484 (0.102) & 6.483 (0.339) \\
    &     &     & 500
& 0.493 (0.052) & 0.500 (0.048) & 6.511 (0.146) \\

\addlinespace[4pt]
\multicolumn{7}{c}{\textit{$p=4,\quad m=5$}} \\
\midrule

0.2 & 0.2 & 0.0 & 200
& 0.173 (0.064) & 0.173 (0.063) & 0.001 (0.205) \\
    &     &     & 500
& 0.183 (0.040) & 0.183 (0.042) & 5.000 (0.117) \\

0.2 & 0.8 & 0.0 & 200
& 0.234 (0.256) & 0.689 (0.207) & 0.014 (0.504) \\
    &     &     & 500
& 0.194 (0.179) & 0.757 (0.089) & 0.059 (0.342) \\

0.8 & 0.2 & 2.5 & 200
& 0.575 (0.299) & 0.402 (0.332) & 2.885 (0.776) \\
    &     &     & 500
& 0.637 (0.237) & 0.392 (0.300) & 2.854 (0.613) \\

0.8 & 0.8 & 2.5 & 200
& 0.515 (0.405) & 0.505 (0.405) & 2.652 (1.555) \\
    &     &     & 500
& 0.567 (0.405) & 0.527 (0.413) & 2.436 (1.267) \\

\addlinespace[4pt]
\multicolumn{7}{c}{\textit{$p=4,\quad m=8$}} \\
\midrule

0.1 & 0.1 & 3.5 & 200
& 0.098 (0.030) & 0.099 (0.029) & 3.502 (0.064) \\
    &     &     & 500
& 0.099 (0.020) & 0.099 (0.018) & 3.501 (0.044) \\

0.1 & 0.5 & 3.5 & 200
& 0.111 (0.077) & 0.493 (0.043) & 3.474 (0.211) \\
    &     &     & 500
& 0.105 (0.054) & 0.496 (0.026) & 3.486 (0.138) \\

0.5 & 0.1 & 6.5 & 200
& 0.495 (0.041) & 0.106 (0.075) & 6.505 (0.201) \\
    &     &     & 500
& 0.499 (0.024) & 0.103 (0.044) & 6.507 (0.110) \\

0.5 & 0.5 & 6.5 & 200
& 0.484 (0.102) & 0.491 (0.100) & 6.515 (0.339) \\
    &     &     & 500
& 0.497 (0.054) & 0.498 (0.053) & 6.502 (0.163) \\

\bottomrule
\end{tabular}

\end{table}

\begin{table}[!ht]
\caption{Method of moments estimates for the GWGB-$p$ distribution.}
\label{tab:4.3}
\centering
\small
\renewcommand{\arraystretch}{0.85}

\begin{tabular}{ccccc}
\toprule
\multicolumn{2}{c}{True values} &
\multicolumn{1}{c}{Sample size} &
\multicolumn{2}{c}{MoM estimate and standard deviation} \\
\cmidrule(lr){1-2}
\cmidrule(lr){4-5}
$q$ & $a$ & $n$ & $\hat q$ & $\hat a$ \\
\midrule






\multicolumn{5}{c}{\textit{$p=2,\quad m=16$}} \\
\midrule
0.1 & 0.5 & 200 & 0.100 (0.019) & 0.500 (0.035) \\
    &     & 500 & 0.100 (0.013) & 0.501 (0.021) \\

0.3 & 0.5 & 200 & 0.299 (0.027) & 0.498 (0.067) \\
    &     & 500 & 0.300 (0.018) & 0.501 (0.041) \\

0.5 & 2.5 & 200 & 0.498 (0.032) & 2.503 (0.129) \\
    &     & 500 & 0.501 (0.020) & 2.501 (0.082) \\

0.7 & 2.5 & 200 & 0.699 (0.045) & 2.489 (0.399) \\
    &     & 500 & 0.703 (0.029) & 2.495 (0.247) \\

0.9 & 3.5 & 200 & 0.820 (0.051) & 3.556 (2.166) \\
    &     & 500 & 0.852 (0.041) & 3.541 (2.071) \\






\addlinespace[3pt]
\multicolumn{5}{c}{\textit{$p=4,\quad m=32$}} \\
\midrule
0.1 & 0.5 & 200 & 0.100 (0.021) & 0.499 (0.033) \\
    &     & 500 & 0.100 (0.013) & 0.501 (0.020) \\

0.3 & 0.5 & 200 & 0.300 (0.028) & 0.497 (0.067) \\
    &     & 500 & 0.299 (0.018) & 0.500 (0.043) \\

0.5 & 2.5 & 200 & 0.502 (0.033) & 2.495 (0.123) \\
    &     & 500 & 0.500 (0.020) & 2.498 (0.079) \\

0.7 & 2.5 & 200 & 0.699 (0.045) & 2.507 (0.403) \\
    &     & 500 & 0.701 (0.028) & 2.495 (0.229) \\

0.9 & 3.5 & 200 & 0.823 (0.049) & 3.430 (2.232) \\
    &     & 500 & 0.856 (0.041) & 3.405 (1.894) \\

\bottomrule
\end{tabular}
\end{table}

For the GWGB-$p$ distribution, the precision of the estimator of 
$a$ improves as the true value of the concentration parameter $q$ decreases 
from $0.9$ to $0.1$, as shown in Table \ref{tab:4.1}. As $q$ increases, the 
probability mass becomes more dispersed over the sample space, making the 
principal modal direction more difficult to estimate. Conversely, smaller 
values of $q$ produce greater concentration around the modal direction and, 
consequently, more precise estimates of $a$. This behaviour is also consistent 
with the limiting properties of the pmf in \eqref{eq:3.2}. As $q\to0$, the 
distribution degenerates at the direction $2\pi a/\mathrm{m}$ when $a$ is an 
integer. For non-integer $a$, the probability mass concentrates on the category 
nearest to the principal direction, and when the principal direction lies exactly 
halfway between two adjacent categories, the limiting probability mass is 
equally divided between them. In contrast, as $q\to1$, the distribution 
converges to the discrete uniform distribution and therefore loses information 
about the principal modal direction. These limiting cases are discussed in 
Section~\ref{sec:3.1.1}.

For the SDLB-$p$ distribution, the precision of the estimator of $a$ is likewise strongly influenced by the decay parameters $q_1$ and $q_2$. When both parameters are small, for example $q_1=q_2=0.1$, the estimates of $a$ exhibit substantially lower variability than when both parameters are large, such as $q_1=q_2=0.8$. Moreover, an increase in either $q_1$ or $q_2$, while holding the other parameter fixed, generally leads to increased variability in the estimates of $a$, as shown in Tables \ref{tab:4.2} and \ref{tab:4.4}. This behaviour has the same interpretation as for the GWGB-$p$ distribution, smaller values of $q_1$ and $q_2$ produce greater concentration of probability mass around the modal direction, whereas larger values of either parameter lead to a more dispersed distribution and consequently
less precise estimation of the location parameter. This observation is also
consistent with the limiting properties of the pmf in \eqref{eq:4.7},
discussed in Section~\ref{sub:3.2.1}.

Overall, the simulation results show that parameter estimation becomes more
difficult as the distributions become more dispersed. For the GWGB-$p$
distribution, this occurs when $q$ is close to one, whereas for the SDLB-$p$
distribution, similar behaviour is observed when one or both of $q_1$ and
$q_2$ are close to one. In these settings, the estimates of the location
parameter show greater variability, while the estimates of the concentration
or decay parameters exhibit noticeable finite-sample bias. These effects
remain evident at the larger sample size considered.

\begin{table}[htbp]
\caption{Method of moments estimates for the SDLB-$p$ distribution.}
\label{tab:4.4}
\centering
\small
\setlength{\tabcolsep}{5pt}
\renewcommand{\arraystretch}{0.85}

\begin{tabular}{ccccccc}
\toprule
\multicolumn{3}{c}{True values} &
\multicolumn{1}{c}{Sample size} &
\multicolumn{3}{c}{MoM estimate and empirical standard deviation} \\
\cmidrule(lr){1-3}
\cmidrule(lr){5-7}
$q_1$ & $q_2$ & $a$ & $n$
& $\hat q_1$ & $\hat q_2$ & $\hat a$ \\
\midrule

\multicolumn{7}{c}{\textit{$p=2,\quad m=5$}} \\
\midrule

0.2 & 0.2 & 0.0 & 200
& 0.170 (0.057) & 0.174 (0.055) & 0.010 (0.193) \\
    &     &     & 500
& 0.182 (0.039) & 0.180 (0.039) & 4.996 (0.118) \\

0.2 & 0.8 & 0.0 & 200
& 0.254 (0.255) & 0.682 (0.207) & 4.978 (0.478) \\
    &     &     & 500
& 0.220 (0.186) & 0.746 (0.101) & 0.030 (0.342) \\

0.8 & 0.2 & 2.5 & 200
& 0.568 (0.286) & 0.414 (0.324) & 2.906 (0.739) \\
    &     &     & 500
& 0.646 (0.222) & 0.381 (0.299) & 2.804 (0.591) \\

0.8 & 0.8 & 2.5 & 200
& 0.518 (0.407) & 0.500 (0.410) & 2.449 (1.486) \\
    &     &     & 500
& 0.513 (0.417) & 0.547 (0.417) & 2.585 (1.263) \\

\addlinespace[4pt]
\multicolumn{7}{c}{\textit{$p=2,\quad m=8$}} \\
\midrule

0.1 & 0.1 & 3.5 & 200
& 0.101 (0.033) & 0.099 (0.030) & 3.496 (0.076) \\
    &     &     & 500
& 0.100 (0.023) & 0.101 (0.024) & 3.502 (0.061) \\

0.1 & 0.5 & 3.5 & 200
& 0.118 (0.103) & 0.492 (0.046) & 3.482 (0.303) \\
    &     &     & 500
& 0.107 (0.071) & 0.497 (0.029) & 3.497 (0.197) \\

0.5 & 0.1 & 6.5 & 200
& 0.492 (0.045) & 0.115 (0.100) & 6.506 (0.291) \\
    &     &     & 500
& 0.498 (0.028) & 0.105 (0.070) & 6.498 (0.191) \\

0.5 & 0.5 & 6.5 & 200
& 0.495 (0.096) & 0.488 (0.095) & 6.487 (0.305) \\
    &     &     & 500
& 0.495 (0.054) & 0.499 (0.053) & 6.505 (0.163) \\

\addlinespace[4pt]
\multicolumn{7}{c}{\textit{$p=4,\quad m=5$}} \\
\midrule

0.2 & 0.2 & 0.0 & 200
& 0.172 (0.057) & 0.171 (0.058) & 4.998 (0.192) \\
    &     &     & 500
& 0.182 (0.036) & 0.182 (0.038) & 4.999 (0.107) \\

0.2 & 0.8 & 0.0 & 200
& 0.247 (0.251) & 0.687 (0.200) & 4.992 (0.485) \\
    &     &     & 500
& 0.206 (0.177) & 0.753 (0.085) & 0.046 (0.338) \\

0.8 & 0.2 & 2.5 & 200
& 0.577 (0.299) & 0.396 (0.334) & 2.863 (0.786) \\
    &     &     & 500
& 0.642 (0.235) & 0.377 (0.303) & 2.808 (0.618) \\

0.8 & 0.8 & 2.5 & 200
& 0.518 (0.404) & 0.499 (0.405) & 2.590 (1.541) \\
    &     &     & 500
& 0.554 (0.410) & 0.513 (0.417) & 2.420 (1.249) \\

\addlinespace[4pt]
\multicolumn{7}{c}{\textit{$p=4,\quad m=8$}} \\
\midrule

0.1 & 0.1 & 3.5 & 200
& 0.099 (0.035) & 0.100 (0.033) & 3.504 (0.084) \\
    &     &     & 500
& 0.098 (0.022) & 0.099 (0.022) & 3.502 (0.064) \\

0.1 & 0.5 & 3.5 & 200
& 0.123 (0.102) & 0.491 (0.045) & 3.464 (0.299) \\
    &     &     & 500
& 0.103 (0.074) & 0.497 (0.027) & 3.508 (0.208) \\

0.5 & 0.1 & 6.5 & 200
& 0.492 (0.047) & 0.121 (0.104) & 6.526 (0.300) \\
    &     &     & 500
& 0.498 (0.028) & 0.107 (0.073) & 6.503 (0.205) \\

0.5 & 0.5 & 6.5 & 200
& 0.491 (0.092) & 0.493 (0.092) & 6.507 (0.296) \\
    &     &     & 500
& 0.498 (0.055) & 0.499 (0.055) & 6.503 (0.169) \\

\bottomrule
\end{tabular}

\end{table}


\section{Data Analysis}\label{sec:6}
In this section, we apply the proposed models to two discrete angular datasets exhibiting $p$--fold rotational symmetry.

\textbf{Dataset 1:}
An example of axial data ($p=2$) from geology is presented in Table \ref{tab:1.2}, which consists of the orientations of the shortest projection elongations of sand grains from Gulf Coast beach sand.

\begin{table}[!ht]
\centering
\caption{Orientations of the shortest projection elongations
of sand grains \citep{curray1956}.}
\label{tab:1.2}

\small
\setlength{\tabcolsep}{2pt}

\begin{tabular}{@{}lccccccccc@{}}
\toprule
\textbf{Angle ($^\circ$)}
& \textbf{0--9}
& \textbf{10--19}
& \textbf{20--29}
& \textbf{30--39}
& \textbf{40--49}
& \textbf{50--59}
& \textbf{60--69}
& \textbf{70--79}
& \textbf{80--89} \\
\midrule
\textbf{Frequency}
& 244 & 262 & 246 & 290 & 284
& 314 & 326 & 340 & 371 \\
\midrule
\textbf{Angle ($^\circ$)}
& \textbf{90--99}
& \textbf{100--109}
& \textbf{110--119}
& \textbf{120--129}
& \textbf{130--139}
& \textbf{140--149}
& \textbf{150--159}
& \textbf{160--169}
& \textbf{170--179} \\
\midrule
\textbf{Frequency}
& 401 & 382 & 332 & 322 & 295
& 230 & 256 & 263 & 281 \\
\bottomrule
\end{tabular}
\end{table}

The dataset is grouped into $18$ discrete bins, each of width $10^\circ$. The tabulated frequencies are taken from Chapter 1 of \cite{mardia2009}, who sourced the original data from \citep{curray1956}. For the proposed discrete models, the categories are indexed by the integers $k=0,1,\ldots,17$ and represented by the angular points $0^\circ,10^\circ,\ldots,170^\circ$.

We fit the GWGB-$2$ and SDLB-$2$ distributions to the data using MLE. The dataset consists of $18$ angular categories. For the GWGB-$2$ distribution, the number of categories is $\mathrm{h}=\mathrm{m}/p$; thus, with $h=18$ and $p=2$, we set $\mathrm{m}=\mathrm{h}p=36$. For the SDLB-$2$ distribution, $\mathrm{m}$ represents the number of angular categories, and hence we set $\mathrm{m}=18$. The resulting maximum likelihood estimates, together with their standard errors, negative log-likelihood (NLL), and Akaike information criterion (AIC) values, are presented in Table~\ref{tab:5.2}.

\begin{table}[!ht]
\centering
\small
\caption{Estimates (with standard error in parentheses) for Dataset 1} 
\label{tab:5.2}
\begin{tabular}{ m{2.5cm} | m{4cm} | m{4cm}}  
\hline
 $\textbf{Parameters}$ & \textbf{Estimates of GWGB-$2$} & \textbf{Estimates of SDLB-$2$}\\
 \hline
 $\hat{q}$ &  0.889 (0.005)  & \\
 $\hat{q_1}$ & & 0.816  (0.049)   \\
 $\hat{q_2}$ & & 0.929  (0.016)   \\
 $\hat{a}$ & 8.762 (0.216) & 9.443  (0.393) \\ 
 NLL & 15660.200 & 15658.800\\ 
 AIC & 31324.400 & 31323.600   \\
 \hline
\end{tabular}
\end{table}

The SDLB-$2$ distribution has a slightly smaller AIC than the GWGB-$2$ distribution. However, the AIC values of the two models are close, with a difference of only $0.8$. Therefore, we consider both distributions for interpreting the orientation pattern.

For the GWGB-$2$ distribution, the estimated location is $\hat{a}=8.762$, corresponding to the principal orientation axis
\[
\frac{2\pi\hat{a}_{MLE}}{\mathrm{m}}
=
\frac{2\pi(8.762)}{36}
\approx 87.62^\circ.
\]
Thus, the fitted GWGB-$2$ distribution indicates a principal orientation axis of approximately $87.62^\circ$. The estimate $\hat q=0.889$ controls the concentration of the probability mass around the principal orientation. As shown in Figure~\ref{fig:5.1}, the fitted GWGB-$2$ distribution exhibits a relatively broad spread around this orientation.

For the SDLB-$2$ distribution, the estimated location is $\hat{a}=9.443$, corresponding to the principal orientation axis
\[
\frac{2\pi\hat{a}_{MLE}}{p\mathrm{m}}
=
\frac{2\pi(9.443)}{2(18)}
\approx 94.43^\circ.
\]
Thus, the fitted SDLB-$2$ distribution indicates a principal orientation axis of approximately $94.43^\circ$.

The estimates of the decay parameters are $\hat{q}_1=0.816$ and $\hat{q}_2=0.929$, which determine the rates at which the pmf decreases in the increasing- and decreasing-index directions, respectively, from the principal axis $\left(2\pi\hat{a}/p\mathrm{m}\right)$. As shown in Figure~\ref{fig:5.1}, since $\hat{q}_2>\hat{q}_1$, the fitted SDLB-$2$ pmf exhibits skewness around the principal axis, with probabilities decreasing more gradually from $90^\circ$ toward $0^\circ$ than from $90^\circ$ toward $170^\circ$.

\begin{figure}[!ht]
    \centering
    \includegraphics[width=\linewidth]{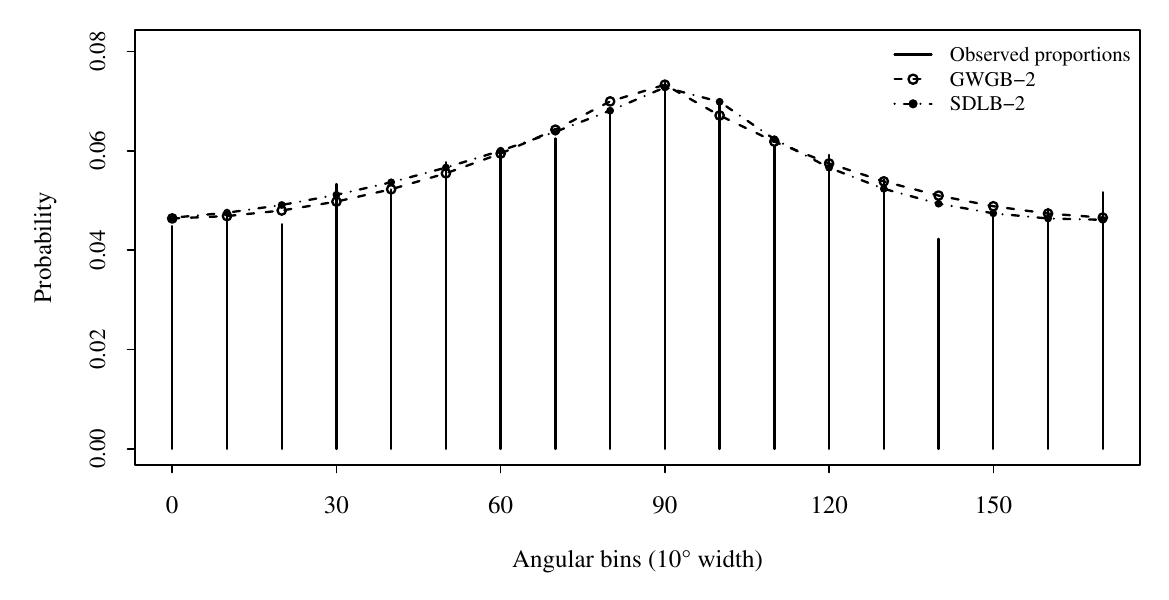}
    \caption{Observed proportions and fitted probabilities for Dataset 1 under the fitted GWGB-$2$ and SDLB-$2$ distributions.}
    \label{fig:5.1}
\end{figure}

Figure~\ref{fig:5.1} shows the observed proportions together with the fitted probabilities under both the GWGB-$2$ and SDLB-$2$ distributions. Both fitted distributions capture the broad preferred orientation of the sand-grain data, while the SDLB-$2$ distribution additionally accommodates different rates of decrease on the two sides of its principal orientation through the parameters $q_1$ and $q_2$.

\noindent
Next, we illustrate the application of our proposed models for the case $p = 4$.

\noindent
\textbf{Dataset 2:}
In urban morphology, Manhattan is well known for its highly regular orthogonal street layout, and we obtained its street network from the \textit{OpenStreetMap} (OSM) database using the \texttt{osmdata} package in \textsf{R}. The street geometries within the administrative boundary of Manhattan were extracted and decomposed into straight line segments. The compass bearing of each segment was then computed. Following \cite{boeing2019}, the street orientations were grouped into equally spaced angular bins of width $10^\circ$. The Manhattan street network is predominantly organized along two perpendicular directions, producing an orientation distribution concentrated near two principal axes approximately $90^\circ$ apart, consistent with the findings of \cite{boeing2019}. This pattern is also evident in Figure~\ref{fig:5.2}, which represents the street network and corresponding histogram.

\begin{figure}[!ht]
    \centering
    \includegraphics[width=\linewidth]{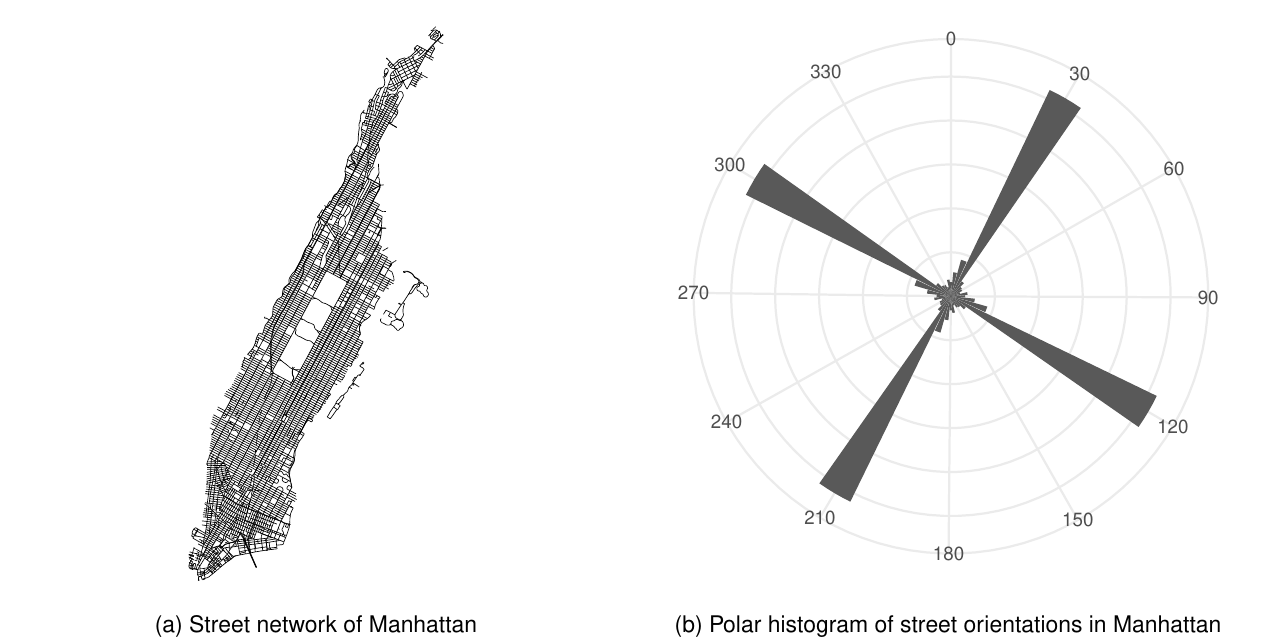}
    \caption{Street network and corresponding histogram for Manhattan}
    \label{fig:5.2}
\end{figure}

Since the Manhattan street network exhibits approximately four-fold rotational symmetry ($p=4$), the street orientations were folded modulo $90^\circ$ and grouped into nine equal-width bins of $10^\circ$. The resulting frequency counts are presented in Table~\ref{tab:1.3}.

\begin{table}[!ht]
\centering
\caption{Frequency distribution of Manhattan street orientations after reduction modulo $90^\circ$.}
\label{tab:1.3}
\begin{tabular}{|l|c|c|c|c|c|c|c|c|c|}
\hline
\textbf{Bin center ($^\circ$)} & \textbf{0} & \textbf{10} & \textbf{20} & \textbf{30} & \textbf{40} & \textbf{50} & \textbf{60} & \textbf{70} & \textbf{80} \\ \hline
\textbf{Frequency} & 1298 & 2195 & 3449 & 20819 & 1676 & 1212 & 1130 & 870 & 1526 \\ \hline
\end{tabular}%
\end{table}

Now we fit the GWGB-$4$ and SDLB-$4$ distributions presented in Sections~\ref{sub:3.1} and~\ref{sub:3.2}, respectively. To fit these models, the nine orientation categories were indexed by the integers $0,1,\ldots,8$, corresponding to the angular bin centers $0^\circ,10^\circ,\ldots,80^\circ$. We set $\mathrm{m}=36$ for GWGB-$4$ and $\mathrm{m}=9$ for SDLB-$4$. The results are presented in Table~\ref{tab:1.4}.

\begin{table}[!ht]
\centering
\small
\caption{Estimates (with standard error in parentheses) for Dataset 2} 
\label{tab:1.4}
\begin{tabular}{ m{2.5cm} | m{4cm} | m{4cm}}  
\hline
 $\textbf{Parameters}$ & \textbf{Estimates of GWGB-$4$} & \textbf{Estimates of SDLB-$4$}\\
 \hline
 $\hat{q}$ &  0.379 (0.002)  & \\
 $\hat{q_1}$ & & 0.237  (0.004)   \\
 $\hat{q_2}$ & & 0.467  (0.003)   \\
 $\hat{a}$ & 2.999 (0.001) & 3.000  (0.001) \\ 
 NLL & 54274.280 & 53475.550\\ 
 AIC & 108552.600 & 106957.100\\
 \hline
\end{tabular}
\end{table}

The estimates of the parameter $a$, which specifies the location of the highest probability mass, are similar for both models. However, since SDLB-$4$ has a smaller AIC than GWGB-$4$, we choose the second model for interpretation. The estimate $\hat{a} = 3.000$ corresponds to the principal direction
\[\frac{2 \pi \hat{a}}{p\mathrm{m}} = \frac{2\pi (3.000)}{4(9)} \approx 30^\circ.\]
Thus, the estimated principal street-grid orientation is approximately $30^\circ$ clockwise from the north ($0^\circ$). Under four-fold rotational symmetry, this principal orientation is equivalent to the dominant directions $30^\circ, 120^\circ, 210^\circ$, and $300^\circ$
in the original $360^\circ$ angular space, and these four dominant directions are clearly visible in the polar histogram of street orientations (Figure~\ref{fig:5.2}) and reflect the characteristic orthogonal organization of Manhattan’s street network.

\begin{figure}[!ht]
    \centering
    \includegraphics[width=\linewidth]{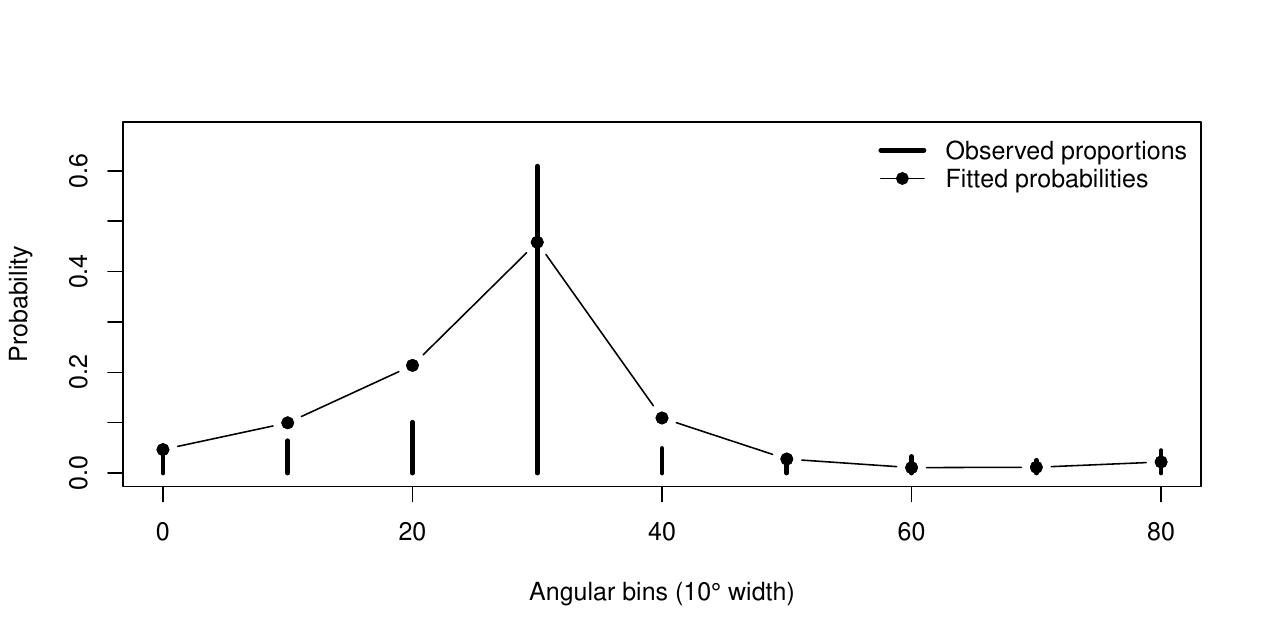}
    \caption{Observed proportions and fitted probabilities for Dataset 2 under SDLB-$4$}
    \label{fig:5.3}
\end{figure}

The estimates $\hat{q}_{1}=0.237$ and $\hat{q}_{2}=0.467$ determine the
rates at which the pmf decreases in the increasing- and decreasing-index
directions, respectively, from the principal direction
$\left(2\pi\hat{a}/p\mathrm{m}\right)$. As shown in Figure~\ref{fig:5.3}, since $\hat{q}_{1}<\hat{q}_{2}$, the fitted pmf is skewed around the principal direction, with probabilities decreasing more rapidly from $30^\circ$ toward $40^\circ$ and more gradually from $30^\circ$ toward $20^\circ$.

\subsection{Comparison of proposed models with alternatives}
We compare the proposed GWGB-$2$ and SDLB-$2$ distributions for Dataset 1 with discretized versions of the continuous axial distributions: the sine-skewed axial Jones–Pewsey (SSAJP), symmetric axial Jones–Pewsey (AJP), sine-skewed axial von Mises (SSAVM), and sine-skewed axial wrapped Cauchy (SSAWC) distributions \citep{abe2012}. To ensure that these continuous distributions were compared on the same finite angular sample space as the proposed distributions, they were discretized by assigning to each observed angular category the probability obtained by numerically integrating the corresponding continuous density over the interval represented by that category. The resulting discretized distributions are denoted by D--SSAJP, D--AJP, D--SSAVM, and D--SSAWC, respectively. The parameter estimates, together with the corresponding AIC values, are presented in Table~\ref{tab:1.5}.

\begin{table}[!ht]
\centering
\small
\caption{Maximum likelihood estimates and model-selection criteria for the
models fitted to Dataset 1}
\label{tab:1.5}
\begin{tabular}{m{2cm} | m{7.5cm} | m{1.5cm}}
\hline
\textbf{Model}
& \textbf{Maximum likelihood estimates}
& \textbf{AIC} \\
\hline

SDLB-$2$
& $\hat{q}_{1}= 0.816,\quad
   \hat{q}_{2}= 0.929,\quad
   a^*= 94.430^\circ$
& 31323.600 \\

GWGB-$2$
& $\hat{q}= 0.889,\quad
   a^* = 2\pi\hat{a}/36 = 87.620^\circ$
& 31324.400 \\

D-SSAJP
& $\hat{\kappa}= 7.265,\quad
   \hat{\psi} =  8.169,\quad
   \hat{\lambda}= -0.183,\quad
   \hat{\mu}= 141.692 ^{\circ}$
& 31324.430 \\

D-AJP
& $\hat{\kappa}= 0.512,\quad
   \hat{\psi} =  -3.343,\quad
   \hat{\mu}= 87.605^{\circ}$
& 31326.280 \\

D--SSAWC
& $\hat{\rho}= 0.319,\quad
   \hat{\lambda}= -0.050,\quad
   \hat{\mu}= 92.132^{\circ}$
& 31331.170  \\

D--SSAVM
& $\hat{\kappa}= 0.786,\quad
   \hat{\lambda}= -0.150,\quad
   \hat{\mu}= 108.633^{\circ}$
& 31334.150 \\

\hline
\end{tabular}
\end{table}

Among the models considered, the proposed SDLB-$2$ distribution attained the smallest
AIC value of $31323.600$. However, the AIC differences between the SDLB-$2$
and the GWGB-$2$ and D-SSAJP distributions are only $0.800$ and $0.830$,
respectively. Thus, these three models provide essentially comparable empirical
support for Dataset 1. The D-AJP, D--SSAWC, and D--SSAVM distributions have
larger AIC values. An additional advantage of the proposed GWGB-$2$ and
SDLB-$2$ distributions is that their category probabilities are defined
directly on the observed discrete angular support, whereas the discretized
continuous alternatives require numerical integration over the corresponding
angular intervals during likelihood evaluation.

\section{Conclusion}\label{sec:7}
In this article, we proposed a unified framework for constructing native discrete distributions on the reduced angular space $\Omega_\Theta \subset [0, 2\pi/p)$ for modelling directional observations exhibiting $p$--fold rotational symmetry. We discussed particular distributions arising from this framework, together with some of their properties and parameter estimation methods. We further demonstrated the practical applicability of the proposed distributions through two real datasets exhibiting $2$--fold and $4$--fold rotational symmetry.

To the best of our knowledge, existing continuous models have focused predominantly on the axial case ($p=2$). Consequently, for Dataset 1, we compared the performance of the proposed axial GWGB-$2$ and SDLB-$2$ distributions with the discretized versions of well-suited continuous axial distributions.

The proposed framework currently applies to univariate discrete directional observations. Future work may extend it to regression models for discrete directional responses with p--fold rotational symmetry and to bivariate discrete directional distributions whose components are defined on the reduced angular spaces corresponding to $p_1$--fold and $p_2$--fold rotational symmetry.
 
\section*{Declarations}
\noindent
\textbf{Funding}\\
This project is partially funded by the Government of India as part of the Start-up Research Grant. The funding is provided through the Science and Engineering Research Board, Department of Science and Technology, India to Dr. Jayant Jha (SRG/2022/000151). Brajesh Kumar Dhakad is supported by the National Board for Higher Mathematics (NBHM) (0203/4/2022/R\&D-II/ 3262). The funders had no role in the methodological development, data analysis, decision to publish, or preparation of the manuscript.

\noindent
\textbf{Competing Interests} \\
The authors have no relevant financial or non-financial interests to disclose.

\noindent
\textbf{Data Availability} \\
 Dataset 1 and Dataset 2 are obtained from the sources that are detailed in the section~\ref{sec:6}.

\noindent
\textbf{Code Availability} \\
The R codes used for the data analysis are available upon request.

\bibliography{sn-bibliography}


\end{document}